\documentclass[%
 reprint,
 amsmath,amssymb,
 aps,
]{revtex4-2}

\usepackage{graphicx}
\usepackage{dcolumn}
\usepackage{bm}

\usepackage{amsthm,amsmath,amssymb}
\usepackage{mathrsfs}
\usepackage{bm}
\usepackage{lineno}
\newtheorem{definition}{Definition}
\newtheorem{theorem}{Theorem}
\newtheorem{lemma}{Lemma}
\newtheorem{inference}{Inference}

\begin{document}

\preprint{APS/123-QED}

\title{Generalized Potential and Mathematical Principles of Nonlinear Analysis}

\author{Peng Yue}
\email[Emial: ]{pengyu.yue@outlook.com}
\affiliation{University of Electronic Science and technology of China, Chengdu, China}


\begin{abstract}
In the past hundred years, chaos has always been a mystery to human beings, including the butterfly effect discovered in 1963 and the dissipative structure theory which won the chemistry Nobel Prize in 1977. So far, there is no quantitative mathematical-physical method to solve and analyze these problems. In this paper, the idea of using field theory to study nonlinear systems is put forward, and the concept of generalized potential is established mathematically. The physical essence of generalized potential promoting the development of nonlinear field is extended and the spatiotemporal evolution law of generalized potential is clarified.  Then the spatiotemporal evolution law of conservative system and pure dissipative system is clarified. Acceleration field, conservative vector field and dissipation vector field are established to evaluate the degree of conservation and dissipation of physical field. Finally, the development route of new field research and the precondition of promoting engineering application in the future are discussed.
\end{abstract}
\maketitle



For a long time, the researches on nonlinear problems have attracted much attention. The reason for this enthusiasm for research is that, on the one hand, due to the continuous in-depth study of turbulence, buckling and other mechanical theories, the problems found have obvious nonlinear characteristics \cite{Kuchnen2019}, which greatly restricts related theoretical breakthroughs and engineering innovation. On the other hand, the well-known butterfly effect, dissipation mechanism and other factors also show the influence of nonlinear factors in the researches of medical statistics \cite{Shen2020}, economics \cite{Manda2020}, sociology \cite{Barnett2020} and other disciplines, which also makes the research on nonlinear problems attract the attention of researchers in other subjects. Therefore, it is necessary to develop innovative mathematical analysis methods to describe nonlinear phenomena.

At present, in addition to relying on phenomenological statistics to obtain the basic laws of nonlinear dynamics \cite{Kuchnen2019}\cite{Shen2020}\cite{Manda2020}\cite{Barnett2020}, international researches on nonlinear problems are mainly based on two major ideas. One is to rely on the continuous development of mathematical theory, through geometric methods or the solution of nonlinear partial differential equations \cite{Buckmaster2019} to obtain the basic laws of system nonlinearity. The other is to add the necessary physical conditions to linearize the nonlinear problem, and develop and improve the approximate characteristics of nonlinearity through the study of linear problems \cite{Hamzi2019}. However, one obstacle is that the development of mathematical theory depends on sufficient inspiration and continuous updating of philosophical methodology. Another obstacle is that the physical simplification method of linearization will lead to the loss of key information of nonlinear problems. For example, the current common initial value sensitive problems cannot be reflected by linearized models. Based on the above two problems, the nonlinear theory of classical mechanics is already in a bottleneck period, and it is urgent to develop new analysis methods to deal with the current problems.

Usually, any nonlinear system can be expressed by the following general expressions:
\begin{equation}
\dot{x}_{i}=f_{i}\left(x_{j}\right), f_{i}\in C^{2}\left(\mathbb{R}^{n}\right)
\end{equation}

For example, Lorentz system can be expressed as:
\begin{equation}
\begin{array}{c}
\dot{x}_{1}=\sigma\left(x_{2}-x_{1}\right)\\
\dot{x}_{2}=rx_{1}-x_{2}-x_{1}x_{3}\\
\dot{x}_{3}=x_{1}x_{2}-bx_{3}
\end{array}
\end{equation}

At present, the common method for studying above the system is Lagrange method. The above-mentioned system is taken as the motion equation of the particle system, and the stability of the system can be judged by studying the singularity of the system \cite{Kuznetsov2020}. This method is widely used in related projects. However, this method has great limitations in revealing the nonlinear dissipation characteristics of the system at the system level, for example, it is very difficult to judge the occurrence of butterfly effect and reveal the characteristics of dissipative structure in chaos systems.
In the field of mechanics, the common analysis methods include Lagrange method and Euler method. In this paper, Euler method is used to describe the nonlinear system. For example, for Lorentz system (2), “phase flow” can be expressed as FIG. 1 to FIG. 3.
\begin{figure}[ht]
\includegraphics[scale=0.14]{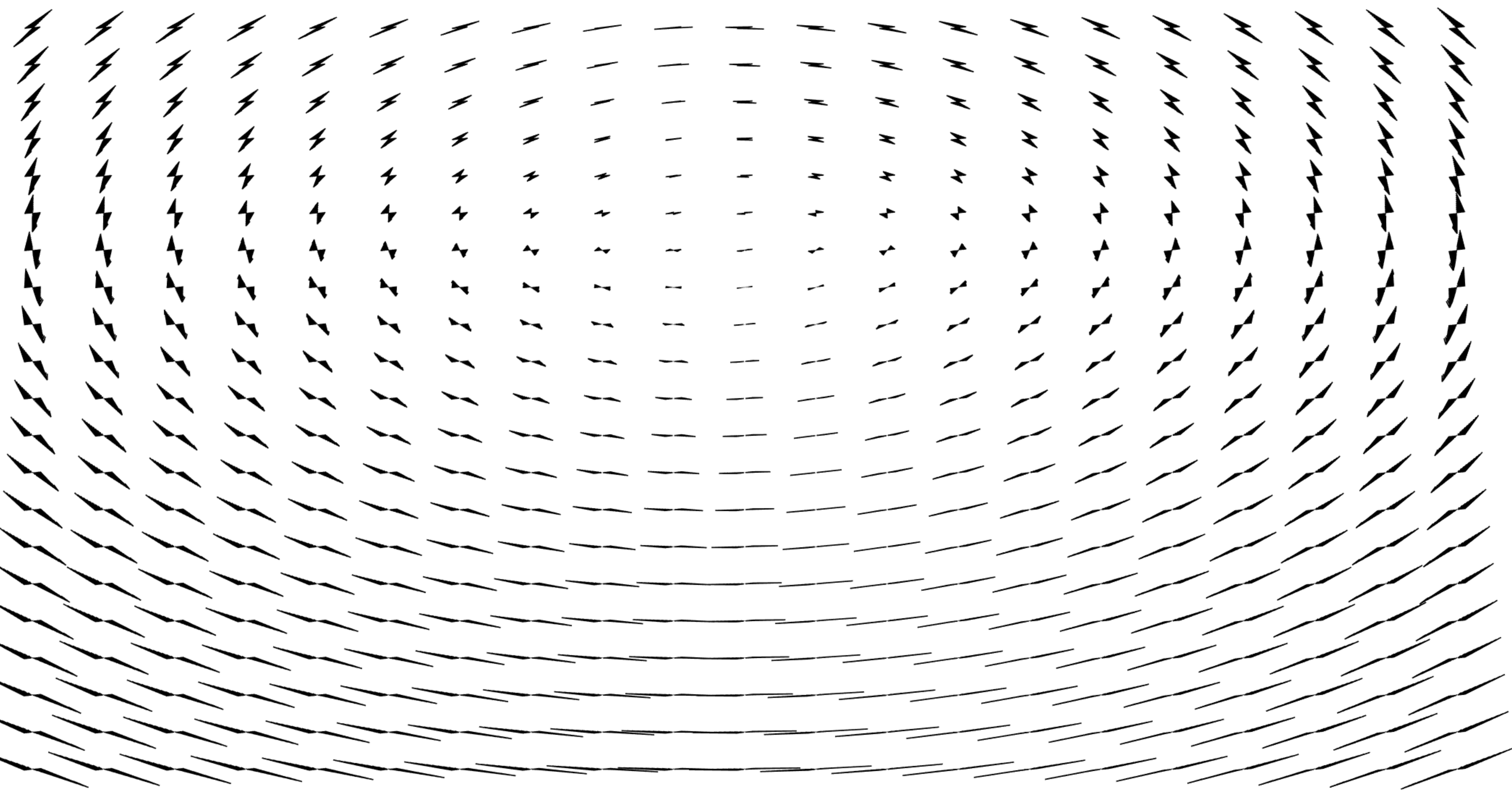}
\caption{\label{}Phase flow diagram on XOY plane for Lorentz system}
\end{figure}
\begin{figure}[ht]
\includegraphics[scale=0.10]{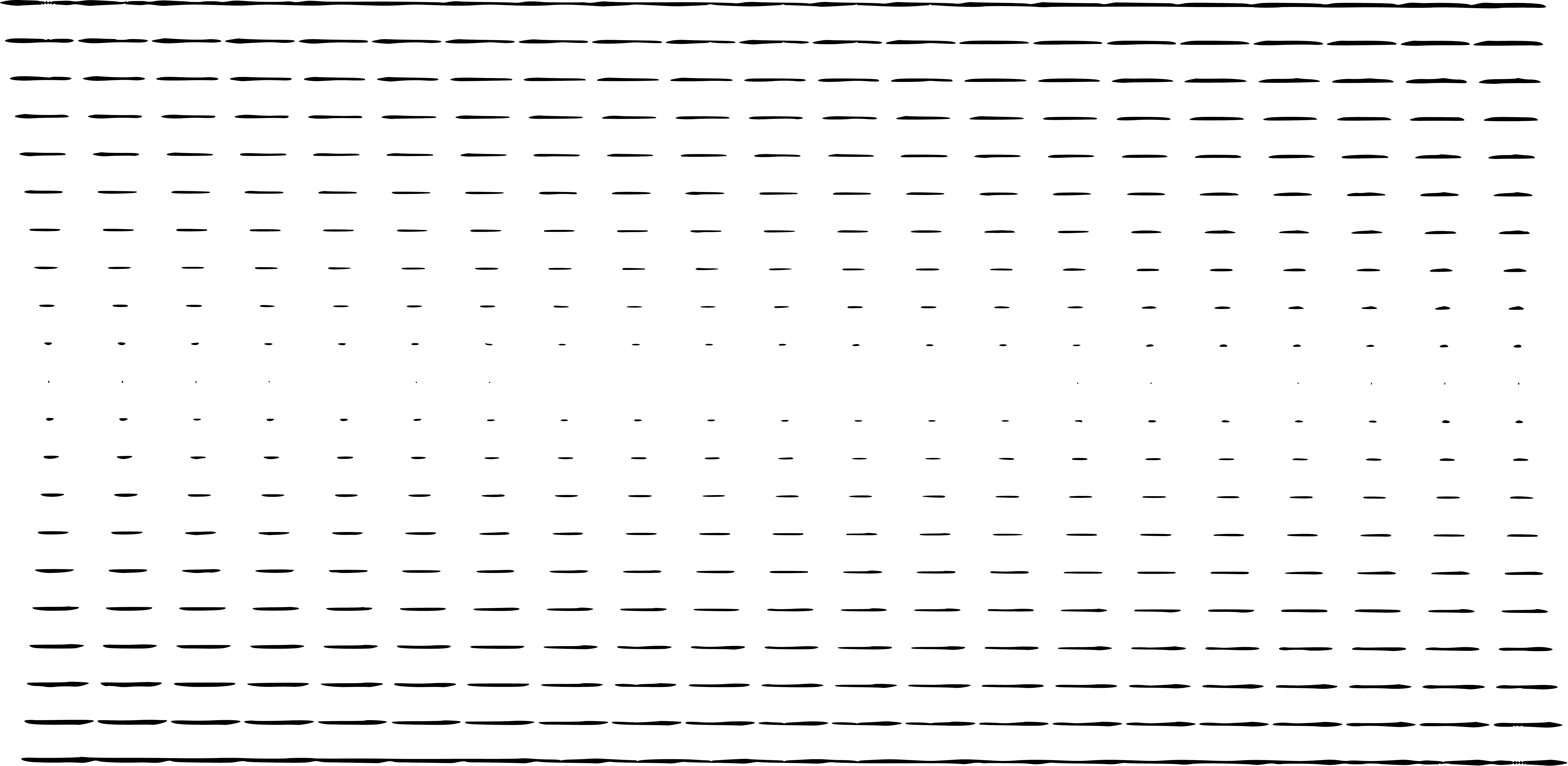}
\caption{\label{}Phase flow diagram on XOZ plane for Lorentz system}
\end{figure}

In addition, it can be seen that the streamline diagram of phase flow fits well with the traditional phase diagram of nonlinear system.
\begin{figure}[ht]
\includegraphics[scale=0.23]{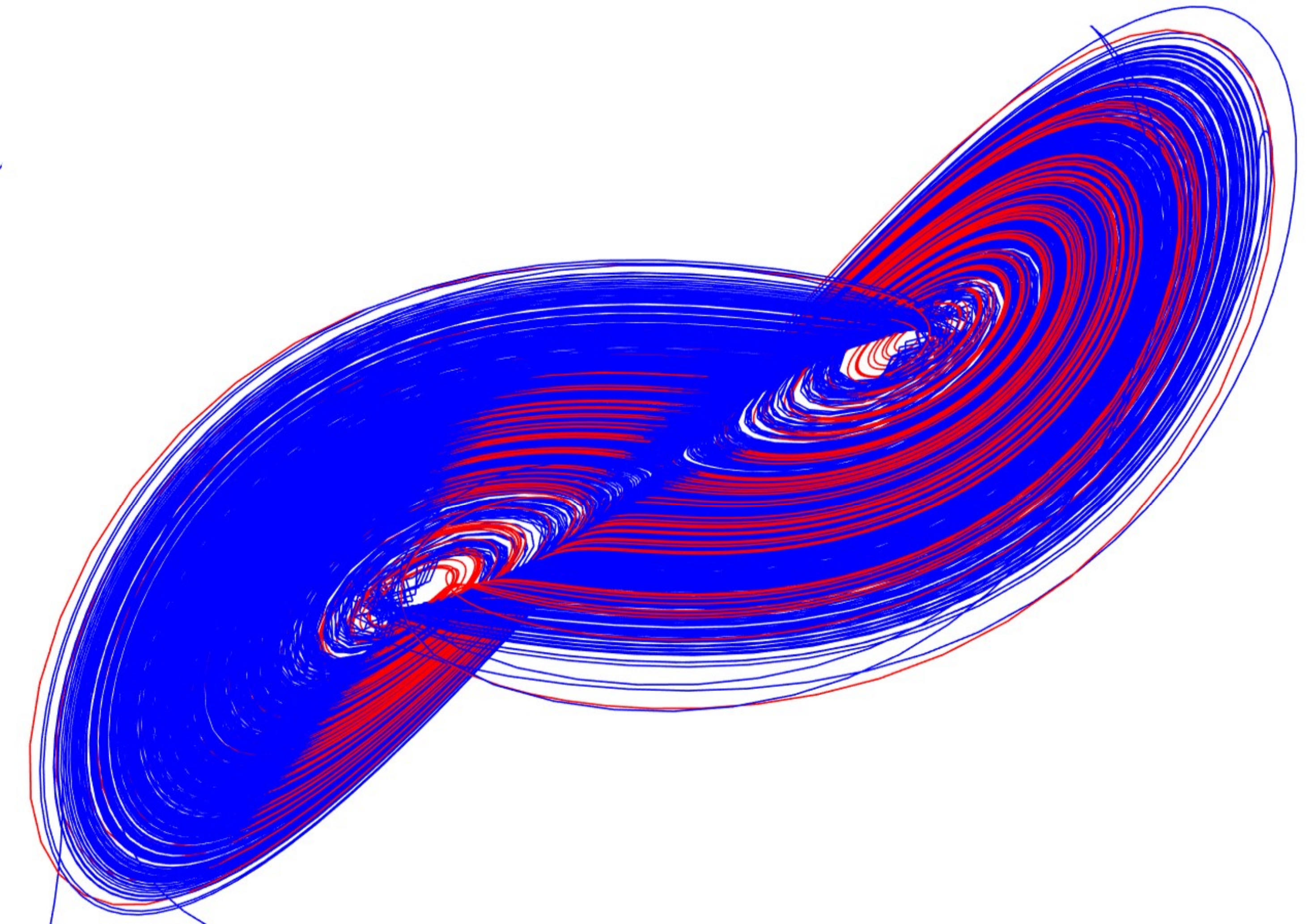}
\caption{\label{}Comparison between streamline diagram (blue) of phase flow and traditional phase diagram (red) for Lorentz system}
\end{figure}

Streamline, as one of the physical quantities of fluid mechanics, refers to the curve which is tangent to the velocity vector at every point in the flowfield. It is a curve composed of different fluid particles at the same time and reveals the spatial distribution law of velocity, which coincides with the relationship between momentum and position in phase diagram. It inspires a new method to solve PDEs and will be discussed in the future because it is not related to the content discussed in this paper.

From FIG. 1 to FIG. 3 provide a good foundation for the introduction of nonlinear analysis into field theory. In the history of conservative field research, firstly, Lagrange \cite{Fukushima2017} described the gravitational field with gravitational potential. The gravity at any point is equal to the negative gradient of the gravitational potential at that point. Then, Laplace \cite{Ebert2018} gives the gravitational potential equation in rectangular coordinates. Laplace \cite{Ebert2018} assumes that the Laplace equation holds when the attractive particle is in the body. Finally, Poisson \cite{Abadias2020} fixed the error. Poisson \cite{Abadias2020} pointed out that if the point is in the interior of the attractor, the Laplace equation should be changed to Poisson equation in practical physics research.
Therefore, we can understand that potential is the reason to promote the development of conservative field \cite{Greengard2018}. As we all know, nonlinear field includes conservative field and dissipative field. According to the expansion of the conservative field researches, we believe that the nonlinear field also has a “push potential" to promote its development. In order to study this “push potential”, here is the concept of generalized potential.
\begin{definition}
For $\Omega \subset \mathbb{R}^{n}$, $M=\left(\Omega,\mathscr{S}\right)$, existence tensor field $\sigma_{ij}\in \mathscr{T}_{M}$. if conjugate gradient of vector $\varphi_{i}\in \mathscr{T}_{M}$ satisfy condition $\sigma_{ij}=\varphi_{i}\partial_{j}$, the vector $\varphi_{i}$ is generalized potential of tensor field $\sigma_{ij}$.
\end{definition}

In order to explain why conjugate gradient is used in the definition, lemma 1 need to be studied.
\begin{lemma}
The curl of vector’s gradient is zero, and the curl of vector’s conjugate gradient is not zero.
\end{lemma}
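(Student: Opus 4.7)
The plan is to reduce both statements to two classical calculus facts: the symmetry of mixed second partial derivatives (Schwarz's theorem, available because the hypothesis $f_{i}\in C^{2}$ transfers to $\varphi\in C^{2}$), and the Leibniz product rule. Following the paper's notation I read the gradient of the vector $\varphi_{i}$ as the tensor $\partial_{j}\varphi_{i}$ in which the derivation acts directly on the component function, whereas the conjugate gradient $\varphi_{i}\partial_{j}$ places the component on the left of the derivation, so it is naturally an operator-valued tensor.  The claim then amounts to comparing the two orderings after antisymmetrization.

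For the first half, I would apply the curl with respect to the free derivative index.  Because the derivative is already attached to $\varphi_{i}$, the computation reduces to antisymmetrizing two composed derivatives:
\begin{equation*}
\epsilon_{\ell kj}\,\partial_{k}\bigl(\partial_{j}\varphi_{i}\bigr)=\epsilon_{\ell kj}\,\partial_{k}\partial_{j}\varphi_{i}=0,
\end{equation*}
since $\epsilon_{\ell kj}$ is antisymmetric in $k,j$ while $\partial_{k}\partial_{j}$ is symmetric by Schwarz.  This is just the familiar identity $\nabla\times\nabla\varphi_{i}=0$ applied componentwise in $i$.

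For the second half, I would perform the same calculation on $\varphi_{i}\partial_{j}$, treated as an operator on a smooth test function.  The Leibniz rule splits the computation into two pieces,
\begin{equation*}
\epsilon_{\ell kj}\,\partial_{k}\bigl(\varphi_{i}\partial_{j}\bigr)=\epsilon_{\ell kj}(\partial_{k}\varphi_{i})\partial_{j}+\epsilon_{\ell kj}\,\varphi_{i}\,\partial_{k}\partial_{j},
\end{equation*}
and the second piece again vanishes by Schwarz, leaving $\epsilon_{\ell kj}(\partial_{k}\varphi_{i})\partial_{j}$, which is essentially $(\nabla\varphi_{i})\times\nabla$ and is generically nonzero on any nonconstant $C^{2}$ test field.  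To finish, I would exhibit an explicit $\varphi$ (for instance a linear vector field) and a test function on which this residual is demonstrably nonzero, so that the claim is established as a statement about generic fields rather than a universal identity.

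The main obstacle I anticipate is not algebraic but definitional: one must fix a precise meaning for ``the curl of an operator-valued tensor,'' since $\varphi_{i}\partial_{j}$ is not a function-valued field in the classical sense.  I would resolve this by declaring that the curl acts through its values on an arbitrary smooth test field, which is equivalent to antisymmetrizing the outer derivative index acting on $\sigma_{ij}$.  Once this convention is in place, the entire content of the lemma is the single observation that commuting $\varphi_{i}$ past $\partial_{j}$ costs exactly a Leibniz remainder, and that this remainder survives antisymmetrization even though the pure second-derivative part does not.
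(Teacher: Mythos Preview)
Your first half matches the paper exactly: both of you antisymmetrize the two derivative indices and invoke Schwarz to kill $\epsilon_{\ell kj}\partial_{k}\partial_{j}\varphi_{i}$.

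Your second half, however, is built on a different reading of ``conjugate gradient.''  You take $\varphi_{i}\partial_{j}$ literally as an operator-valued tensor and extract the nonvanishing piece from the Leibniz remainder $\epsilon_{\ell kj}(\partial_{k}\varphi_{i})\partial_{j}$.  The paper does not do this: it treats $\varphi_{i}\partial_{j}$ as the ordinary function-valued tensor $\partial_{j}\varphi_{i}$ (the transpose of the gradient), and then takes the curl on the \emph{first} tensor slot in both cases.  For the gradient that slot carries the derivative index, so Schwarz kills it; for the conjugate gradient that slot carries the component index, so the expression is $\epsilon^{ki}{}_{m}\,\varphi_{i,kj}$, whose antisymmetric pair $(k,i)$ does not coincide with the Schwarz-symmetric pair $(k,j)$, and no cancellation occurs.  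The paper then argues by contradiction, arriving at the (mathematically dubious) conclusion ``$k\equiv m$'' from an index relabeling.

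So your mechanism (Leibniz remainder on an operator) is different from the paper's mechanism (curl on the non-derivative slot of a transposed tensor).  Your route has the advantage of ending with an explicit counterexample, which is more convincing than the paper's index-coincidence contradiction; its cost is that the operator interpretation sits uneasily with the rest of the paper, since Definition~1 identifies $\sigma_{ij}=\varphi_{i}\partial_{j}$ with the stress tensor, a function-valued field.  If you want to align with the paper, drop the operator reading, keep the curl acting on the first slot, and simply note that $\epsilon^{ki}{}_{m}\,\partial_{k}\partial_{j}\varphi_{i}$ has no reason to vanish, exhibiting a linear $\varphi$ as your witness.
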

\begin{proof}[Proof]
Due to arbitrary vector $\varphi_{i}\in \mathscr{T}_{M}$, $M=\left(\Omega,\mathscr{S}\right)$, the curl of vector s’ gradient is
$$\bm{e}^{k}\partial_{k}\times \bm{e}^{i}\bm{e}^{j}\partial_{i}\varphi_{j}=\left(\bm{e}^{k}\times \bm{e}^{i}\right)\bm{e}^{j}\varphi_{j,ki}=\epsilon^{ki}_{\bullet \bullet m}\bm{e}^{m}\bm{e}^{j}\varphi_{j,ki}$$

For partial derivatives, the order of derivation is transformed, and its value does not change.
$$\epsilon^{ki}_{\bullet \bullet m}\bm{e}^{m}\bm{e}^{j}\varphi_{j,ki}=\epsilon^{ki}_{\bullet \bullet m}\bm{e}^{m}\bm{e}^{j}\varphi_{j,ik}$$

By converting the index again, we can obtain
$$\epsilon^{ki}_{\bullet \bullet m}\bm{e}^{m}\bm{e}^{j}\varphi_{j,ik}=\epsilon^{ik}_{\bullet \bullet m}\bm{e}^{m}\bm{e}^{j}\varphi_{j,ik}=-\epsilon^{ki}_{\bullet \bullet m}\bm{e}^{m}\bm{e}^{j}\varphi_{j,ki}=0$$

It is proved that the curl of vector’s gradient is zero. Then, assuming that the curl of the conjugate gradient of the vector is zero and always holds
$$\bm{e}^{k}\partial_{k}\times \bm{e}^{i}\bm{e}^{j}\varphi_{i}\partial_{j}\equiv -\bm{e}^{k}\partial_{k}\times \bm{e}^{i}\bm{e}^{j}\varphi_{i}\partial_{j}$$

We can get the following formula
$$\bm{e}^{k}\partial_{k}\times \bm{e}^{i}\bm{e}^{j}\varphi_{i}\partial_{j}=\left(\bm{e}^{k}\times \bm{e}^{i}\right)\bm{e}^{j}\varphi_{i,kj}=\epsilon^{ki\bullet}_{\bullet \bullet m}\bm{e}^{m}\bm{e}^{j}\varphi_{i,kj}$$

According to the rule of index conversion, we can obtain
$$-\epsilon^{ki\bullet}_{\bullet \bullet m}\bm{e}^{m}\bm{e}^{j}\varphi_{i,kj}=\epsilon_{m \bullet \bullet}^{\bullet i k}\bm{e}^{m}\bm{e}^{j}\varphi_{i,kj}=\epsilon_{k \bullet \bullet}^{\bullet i m}\bm{e}^{k}\bm{e}^{j}\varphi_{i,mj}$$

Combine the above two formulas, we can get a funny conclusion
$$k\equiv m$$

According to the definition of tensor index, it is obvious that the curl of vector’s conjugate gradient is not zero.
\end{proof}

Thus it can be seen that the conjugate gradient is different from the gradient. The gradient of the vector has the expression problem of rotating field. However, the conjugate gradient represents the second-order tensor, which can be homeomorphic mapping with the original tensor. Therefore, any tensor can be expressed as conjugate gradient form of vector.

In order to establish the relationship between conjugate gradient and gradient, lemma 2 should be studied.
\begin{lemma}
The divergence of the conjugate gradient of an arbitrary vector is equal to the gradient of the divergence of this vector.
\end{lemma}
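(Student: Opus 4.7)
The plan is to prove the identity by expanding both sides in index notation using the conventions set up in Lemma~1. Both the divergence of the conjugate gradient and the gradient of the divergence of $\varphi$ should reduce to the same expression of the form $\bm{e}^{j}\varphi^{k}{}_{,kj}$, and the equality will then follow from the commutativity of second partial derivatives, which is the same tool used in the first half of Lemma~1.

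First I would write the conjugate gradient of $\varphi$ as $\sigma=\bm{e}^{i}\bm{e}^{j}\varphi_{i}\partial_{j}$ and apply the divergence operator $\bm{e}^{k}\partial_{k}\cdot$. The dot product contracts the first pair of basis vectors via $\bm{e}^{k}\cdot\bm{e}^{i}=g^{ki}$, while the outer $\partial_{k}$ differentiates the component, yielding
$$\bm{e}^{k}\partial_{k}\cdot\bm{e}^{i}\bm{e}^{j}\varphi_{i}\partial_{j}=g^{ki}\bm{e}^{j}\varphi_{i,kj}=\bm{e}^{j}\varphi^{k}{}_{,kj}.$$

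Second, I would compute the right-hand side directly. The divergence $\mathrm{div}\,\varphi=\varphi^{k}{}_{,k}$ is a scalar, and applying $\bm{e}^{j}\partial_{j}$ to it gives $\bm{e}^{j}\varphi^{k}{}_{,kj}$. This matches the expression obtained on the left once we invoke $\varphi^{k}{}_{,kj}=\varphi^{k}{}_{,jk}$, which holds because the smoothness hypothesis $f_{i}\in C^{2}$ from the setup passes through to $\varphi$, allowing the two partial derivatives to be interchanged just as in the opening calculation of Lemma~1.

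The main obstacle is really just notational bookkeeping: one must fix the convention for which slot of the two-tensor the divergence contracts against (the natural one here is the first dyadic index, matched by $\bm{e}^{k}\partial_{k}\,\cdot$), correctly raise the dummy index with the metric, and track the comma indices through two successive differentiations so that both sides are literally the same monomial in the partials. Once this accounting is settled the identity is immediate and carries no deeper geometric content beyond the assumed $C^{2}$ regularity.
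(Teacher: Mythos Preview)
Your proposal is correct and follows essentially the same route as the paper: expand both sides in dyadic/index notation, contract the first slot via $\bm{e}^{k}\cdot\bm{e}^{i}$ (the paper uses $\delta^{ki}$ where you write $g^{ki}$), and then invoke commutativity of mixed partials to identify $\varphi_{i,ji}$ with $\varphi_{i,ij}$. The only cosmetic difference is that you raise the dummy index to write $\varphi^{k}{}_{,kj}$ whereas the paper keeps everything downstairs, but the argument is otherwise line-for-line the same.
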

\begin{proof}[Proof]
Due to arbitrary vector $\varphi_{i}\in \mathscr{T}_{M}$, $M=\left(\Omega,\mathscr{S}\right)$, the divergence of the conjugate gradient of the vector is 
$$\bm{e}^{k}\partial_{k} \cdot \bm{e}^{i}\bm{e}^{j}\varphi_{i} \partial_{j}=\bm{e}^{j}\delta^{ki}\varphi_{i,jk}=\bm{e}^{j}\varphi_{i,ji}$$
and the gradient of the divergence of vector $\varphi_{i}$ is
$$\bm{e}^{j}\partial_{j}\left(\bm{e}^{k} \partial_{k} \cdot \bm{e}^{i}\varphi_{i}\right) =\bm{e}^{j}\delta^{ki}\varphi_{i,kj}=\bm{e}^{j}\varphi_{i,ij}$$
Because the order of derivation of partial derivatives can be interchanged arbitrarily, it can be obtained that
$$\bm{e}^{j}\varphi_{i,ji}=\bm{e}^{j}\varphi_{i,ij}$$

It can be concluded that the divergence of the conjugate gradient of an arbitrary vector is equal to the gradient of the divergence of this vector.
\end{proof}
In the field of classical mechanics, Newton's second law is usually considered when describing the relationship between the change of motion state and the force acting on it. So we can get the most important theorem of this paper.
\begin{theorem}
For stress tensor field $\sigma_{ij} \in \mathscr{T}_{M}$, $M=\left(\Omega,\mathscr{S}\right)$. The generalized potential of the stress field is $\varphi_i \in \mathscr{T}_{M}$, then the evolution law of the stress field satisfies
\begin{equation}
\varphi_{i,\bullet j}^{\bullet, i \bullet}=\rho \frac{du_j}{dt}
\end{equation}
where $\rho$ is the density of nonlinear system, $t$ is developing time of the nonlinear system and $u_j$ is the velocity vector of nonlinear system.
\end{theorem}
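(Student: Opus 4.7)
The plan is to combine Cauchy's first equation of motion from continuum mechanics with the generalized-potential representation of the stress field, and then collapse the resulting differential operator using Lemma 2.

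First, I would begin from Newton's second law applied to an arbitrary material element of the continuum. In Eulerian form, and with no body forces, this yields the Cauchy momentum equation $\partial^{i}\sigma_{ij}=\rho\,du_{j}/dt$, where the left-hand side is the divergence of the stress tensor on its first index and the right-hand side is the material (total) derivative of the momentum per unit volume. This is the natural point of entry because the theorem's right-hand side already has the form $\rho\,du_{j}/dt$, so the remaining task is purely kinematic: rewrite the left-hand side in terms of $\varphi_{i}$.

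Next, I would invoke Definition 1 to substitute the generalized-potential representation $\sigma_{ij}=\varphi_{i}\partial_{j}$ into the divergence. The divergence of the stress tensor becomes $\partial^{i}(\varphi_{i}\partial_{j})$, i.e.\ the divergence of a conjugate gradient, which is exactly the object handled in Lemma 2. Applying Lemma 2 gives $\partial^{i}(\varphi_{i}\partial_{j})=\partial_{j}(\partial^{i}\varphi_{i})$, and in the comma-and-bullet index notation used in the theorem this is $\varphi_{i,\bullet j}^{\bullet,i\bullet}$. Chaining the Cauchy equation with this identity delivers the claim.

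The main conceptual obstacle, in my view, is not computational but interpretational: the Cauchy equation must be postulated as the continuum form of Newton's second law in the abstract setting $M=(\Omega,\mathscr{S})$, since Theorem 1 is stated for nonlinear systems in general rather than for a literal mechanical continuum. I would handle this by treating the momentum balance as a physical axiom (parallel to how $F=ma$ is treated in classical particle mechanics), with $\rho$ a generalized density and $u_{j}$ the component of the generalized phase-space velocity, and by noting that any body-force contribution can be absorbed into $\sigma_{ij}$ via a redefinition of $\varphi_{i}$. A secondary, purely bookkeeping point is to confirm the convention that the divergence is taken on the first index of $\sigma_{ij}=\varphi_{i}\partial_{j}$; the commutativity of partial derivatives exploited in Lemma 2 makes the identity insensitive to this choice, so the final form of the theorem is unchanged.
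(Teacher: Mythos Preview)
Your proposal is correct and follows essentially the same route as the paper. The paper's proof differs only cosmetically: it first derives the Cauchy momentum balance $\sigma_{ij,\bullet}^{\bullet\bullet,i}=\rho\,du_{j}/dt$ by writing the total force as a surface integral of $n^{i}\sigma_{ij}$ and applying the Gauss theorem (with body forces absorbed into an equivalent surface traction, exactly as you suggest), and then substitutes $\sigma_{ij}=\varphi_{i}\partial_{j}$ and invokes Lemma~2 to commute the divergence past the conjugate gradient, arriving at $\varphi_{i,\bullet j}^{\bullet,i\bullet}=\rho\,du_{j}/dt$.
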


\begin{proof}[Proof]
Generally, an object is subjected to two kinds of force, surface force $F_j^{plane}$ and mass force $F_j^{mass}$. Hypothetically, if the force caused by mass is equivalently distributed to the surface, then it can be mathematically equivalent to sum surface force $F_j$.
$$F_j=F_j^{plane}+F_j^{mass}$$

According to Gauss theorem, above force can be described by stress state, $n^{i}$ is the surface normal vector.
$$F_{j}=\int_{\partial \tau}n^{i}\sigma_{ij} d\left(\partial \tau\right)=\int_{\tau}\sigma_{ij,\bullet}^{\bullet \bullet, i}d\tau$$
so the force per unit volume can be expressed as
$$f_{j}=\frac{dF_{j}}{d\tau}=\sigma_{ij,\bullet}^{\bullet \bullet, i}$$

Assuming that the generalized potential of the stress field is $\varphi_i \in \mathscr{T}_{M}$, according to lemmas 1 and 2, it is easy to obtain the following expression
$$\sigma_{ij,\bullet}^{\bullet \bullet, i}=\varphi_{i,j \bullet}^{\bullet,\bullet i}=\varphi_{i,\bullet j}^{\bullet, i \bullet}$$

Newton's second law with the force per unit volume in Mechanics can be expressed as:
$$f_{j}=\rho \frac{du_{j}}{dt}$$

Combining all of the above formulas, we can obtain the final result
$$\varphi_{i,\bullet j}^{\bullet, i \bullet}=\rho \frac{du_j}{dt}$$

In this way, the theory has been proved.
\end{proof}

This theorem reveals the spatiotemporal evolution law of nonlinear systems in classical mechanics. According to the famous formula in field theory. It can be simply understood that the sum of Laplacian operator and double curl operator is the operator on the left of formula (3) as shown in following
$$\varphi_{i,\bullet j}^{\bullet, i \bullet}=\varphi_{j,i\bullet}^{\bullet,\bullet i}+\epsilon_{j\bullet \bullet}^{\bullet n m}\epsilon_{m \bullet \bullet}^{\bullet k i}\varphi_{i,nk}$$

In the previous introduction, it has been explained that the physical field represented by the Laplacian operator is a conservative field. Therefore, theorem 2 can be obtained and the proof is omitted.
\begin{theorem}
In the framework of classical mechanics, nature exists a generalized potential $\varphi_i$ in the nonlinear system. The generalized potential is the fundamental reason that drives the development of nonlinear systems. The spatial evolution law of the generalized potential is
\begin{equation}
\varphi_{j,i\bullet}^{\bullet,\bullet i}+\epsilon_{j\bullet \bullet}^{\bullet n m}\epsilon_{m \bullet \bullet}^{\bullet k i}\varphi_{i,nk}=\rho \frac{du_j}{dt}
\end{equation}
where $\rho$ is the density of nonlinear system, $t$ is developing time of the nonlinear system and $u_j$ is the velocity vector of nonlinear system.
\end{theorem}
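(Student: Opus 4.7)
The plan is to obtain Theorem 2 as an immediate consequence of Theorem 1 by recognizing the operator on its left-hand side as the sum of a Laplacian and a double curl, and then substituting. Essentially no new physics is required; the content is the vector-calculus identity
\[
\nabla(\nabla\cdot\bm{\varphi}) = \nabla^{2}\bm{\varphi} + \nabla\times(\nabla\times\bm{\varphi})
\]
rewritten in the index notation the paper has adopted.

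First, I would isolate the purely tensor-algebraic claim
\[
\varphi_{i,\bullet j}^{\bullet, i \bullet}
= \varphi_{j,i\bullet}^{\bullet,\bullet i}
+ \epsilon_{j\bullet \bullet}^{\bullet n m}\epsilon_{m \bullet \bullet}^{\bullet k i}\varphi_{i,nk}
\]
and prove it in isolation, so that the physical theorem reduces to a one-line substitution into equation~(3). The proof of this identity proceeds from the standard contraction formula for a pair of Levi-Civita symbols sharing one index, which expands $\epsilon_{j\bullet\bullet}^{\bullet n m}\epsilon_{m\bullet\bullet}^{\bullet k i}$ into a signed combination of Kronecker deltas of the form $\delta_{j}^{k}\delta^{ni}-\delta_{j}^{i}\delta^{nk}$. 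Contracting this against $\varphi_{i,nk}$ produces two terms, one of which is the Laplacian $\varphi_{j,i\bullet}^{\bullet,\bullet i}$ with a sign that cancels, while the other is precisely the mixed divergence-gradient $\varphi_{i,\bullet j}^{\bullet, i\bullet}$ after using $\varphi_{i,nk}=\varphi_{i,kn}$, the commutativity of partial derivatives already invoked in Lemmas~1 and~2.

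Second, once the operator identity is established, I would invoke Theorem~1 to equate $\varphi_{i,\bullet j}^{\bullet, i \bullet}$ with $\rho\,du_{j}/dt$, substitute, and read off equation~(4). The physical reading — that the Laplacian summand corresponds to the conservative part of the field, in analogy with the Poisson equation reviewed in the introduction, and that the double-curl summand carries the dissipative/rotational contribution — motivates stating the result separately from Theorem~1 but does not enter the derivation.

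The main obstacle will be bookkeeping of index height: the author's bullet notation marks which slot of a multi-index tensor is raised or lowered, and the two Levi-Civita factors appear with mixed up/down placements. I would carry out the contraction slot by slot, verify that each antisymmetrization introduces the sign expected, and confirm that after renaming dummy indices the decomposition matches the form written in the theorem exactly. Everything else — partial-derivative commutation and substitution — is already present in the machinery of Lemmas~1--2 and Theorem~1.
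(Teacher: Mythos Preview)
Your proposal is correct and follows exactly the paper's own route: the paper states the operator identity $\varphi_{i,\bullet j}^{\bullet, i \bullet}=\varphi_{j,i\bullet}^{\bullet,\bullet i}+\epsilon_{j\bullet \bullet}^{\bullet n m}\epsilon_{m \bullet \bullet}^{\bullet k i}\varphi_{i,nk}$ as ``the famous formula in field theory,'' substitutes it into Theorem~1, and explicitly omits further proof. Your plan simply supplies the $\epsilon$--$\delta$ contraction details the paper leaves out, so there is no divergence in approach.
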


In addition, we can get two important corollaries, which reveal conditions for the occurrence of conservative system and pure dissipative system.
\begin{inference}
The conservative field satisfies the following relations, the Laplacian operator of the generalized potential is a constant value, and the acceleration drives the generalized potential to be spatially distributed in the form of double curl operator.
\begin{equation}
L_{j}=\varphi_{j,i\bullet}^{\bullet,\bullet i}, L_{j}\in \mathscr{T}_{M}
\end{equation}

At this time, the space-time evolution law can be transformed into
\begin{equation}
\epsilon_{j\bullet \bullet}^{\bullet n m}\epsilon_{m \bullet \bullet}^{\bullet k i}\varphi_{i,nk}=\rho \frac{du_j}{dt}-L_j
\end{equation}
where $\rho$ is the density of nonlinear system, $t$ is developing time of the nonlinear system and $u_j$ is the velocity vector of nonlinear system.
\end{inference}
\begin{inference}
The pure-dissipation field satisfies the following relations, the double curl operator of the generalized potential is a constant value, and the acceleration drives the generalized potential to be spatially distributed in the form of Laplacian.
\begin{equation}
D_{j}=\epsilon_{j\bullet \bullet}^{\bullet n m}\epsilon_{m \bullet \bullet}^{\bullet k i}\varphi_{i,nk}, D_{j}\in \mathscr{T}_{M}
\end{equation}

At this time, the space-time evolution law can be transformed into
\begin{equation}
\varphi_{j,i\bullet}^{\bullet,\bullet i}=\rho \frac{du_j}{dt}-D_j
\end{equation}
where $\rho$ is the density of nonlinear system, $t$ is developing time of the nonlinear system and $u_j$ is the velocity vector of nonlinear system.
\end{inference}

Formula (5) reveals the basic evolution law of general chaotic system, and formula (6) and formula (7) are two limits of chaotic motion. Therefore, the following physical quantities can be defined to evaluate the degree of conservation and dissipation in any chaotic system.
\begin{definition}
For $\Omega \subset \mathbb{R}^{n}$, $M=\left(\Omega,\mathscr{S}\right)$, existence tensor field $\sigma_{ij}\in \mathscr{T}_{M}$. $\mathcal{L}$ is the Laplace operator and $\mathcal{D}$ is the double curl operator. $L_j \in \mathscr{T}_{M}$ is the conservative vector field and $D_j \in \mathscr{T}_{M}$ is the dissipation vector field for the original tensor field $\sigma_{ij}$. The development speed of complex system is $u_j \in \mathscr{T}_{M}$ and the generalized potential is $\varphi_j \in \mathscr{T}_{M} $, then these nonlinear development law of complex system is as follows:
\begin{equation}
\rho\frac{du_j}{dt}=L_j+D_j, L_j=\mathcal{L}\left(\varphi_j\right), D_j=\mathcal{D}\left(\varphi_j\right)
\end{equation}
where $\rho$ is the density of complex system, $t$ is developing time of the complex system
\end{definition}

Finally, an example is given to illustrate the influence of generalized potential on nonlinear system. If the distribution of generalized potential in three-dimensional space is in the following form
\begin{equation}
\begin{array}{c}
\varphi_{1}=\sigma\left(x_{2}-x_{1}\right)\\
\varphi_{2}=rx_{1}-x_{2}-x_{1}x_{3}\\
\varphi_{3}=x_{1}x_{2}-bx_{3}
\end{array}
\end{equation}

It is noticed that above equations look like Lorentz system, but they are the spatial distribution of the generalized potential, not the spatial distribution of the original field, so it is not Lorentz system. Let's name it the new-1 nonlinear system. Its generalized potential space distribution is shown in FIG. 1 to FIG. 3. The distribution of dissipation vector is shown in FIG. 4 to FIG. 7.
\begin{figure}[ht]
\includegraphics[scale=0.21]{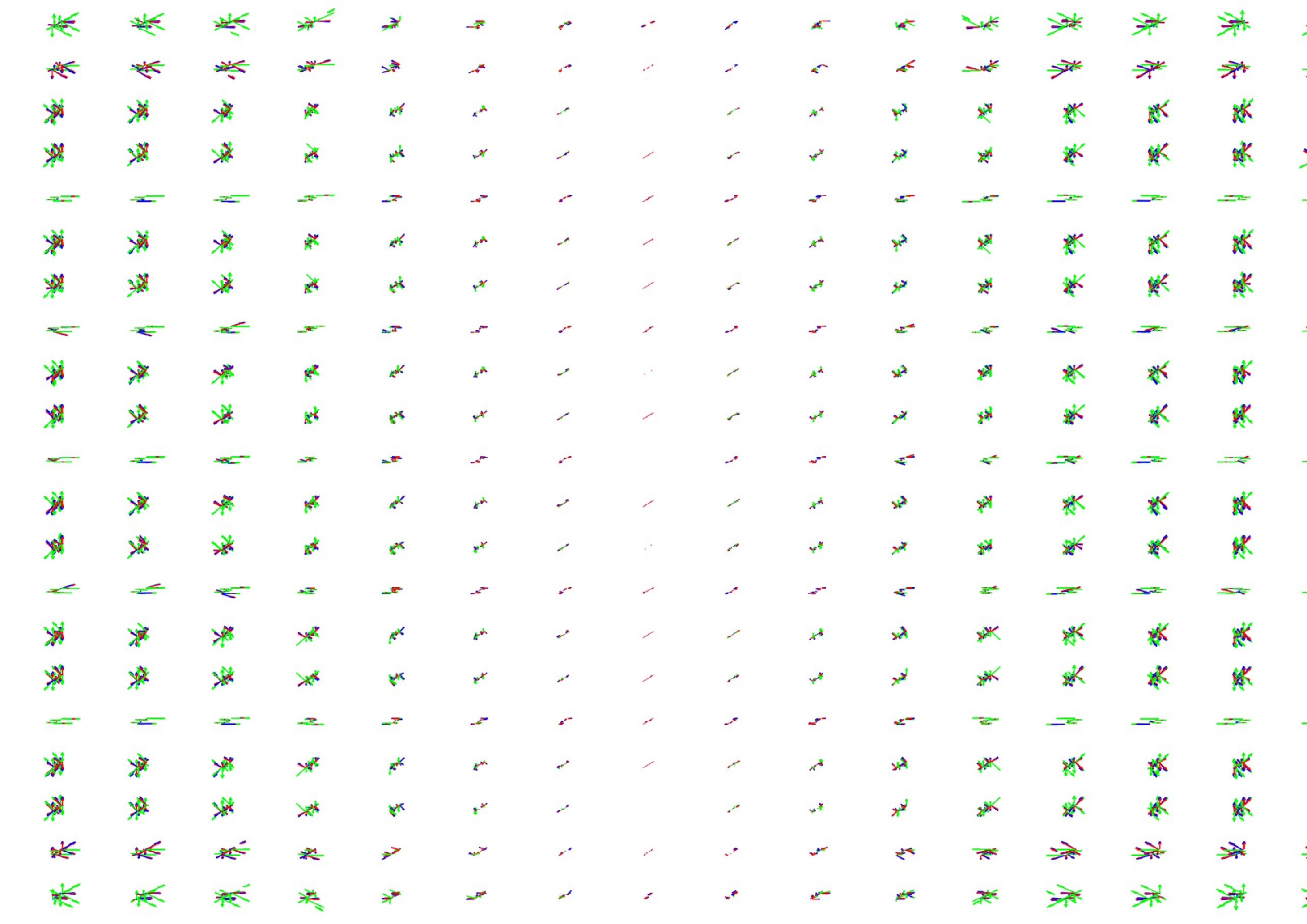}
\caption{\label{}Dissipation (green) and conservative (red) vector field distribution and acceleration field (blue) on XOY plane for nonlinear system New-1}
\end{figure}
\begin{figure}[ht]
\includegraphics[scale=0.21]{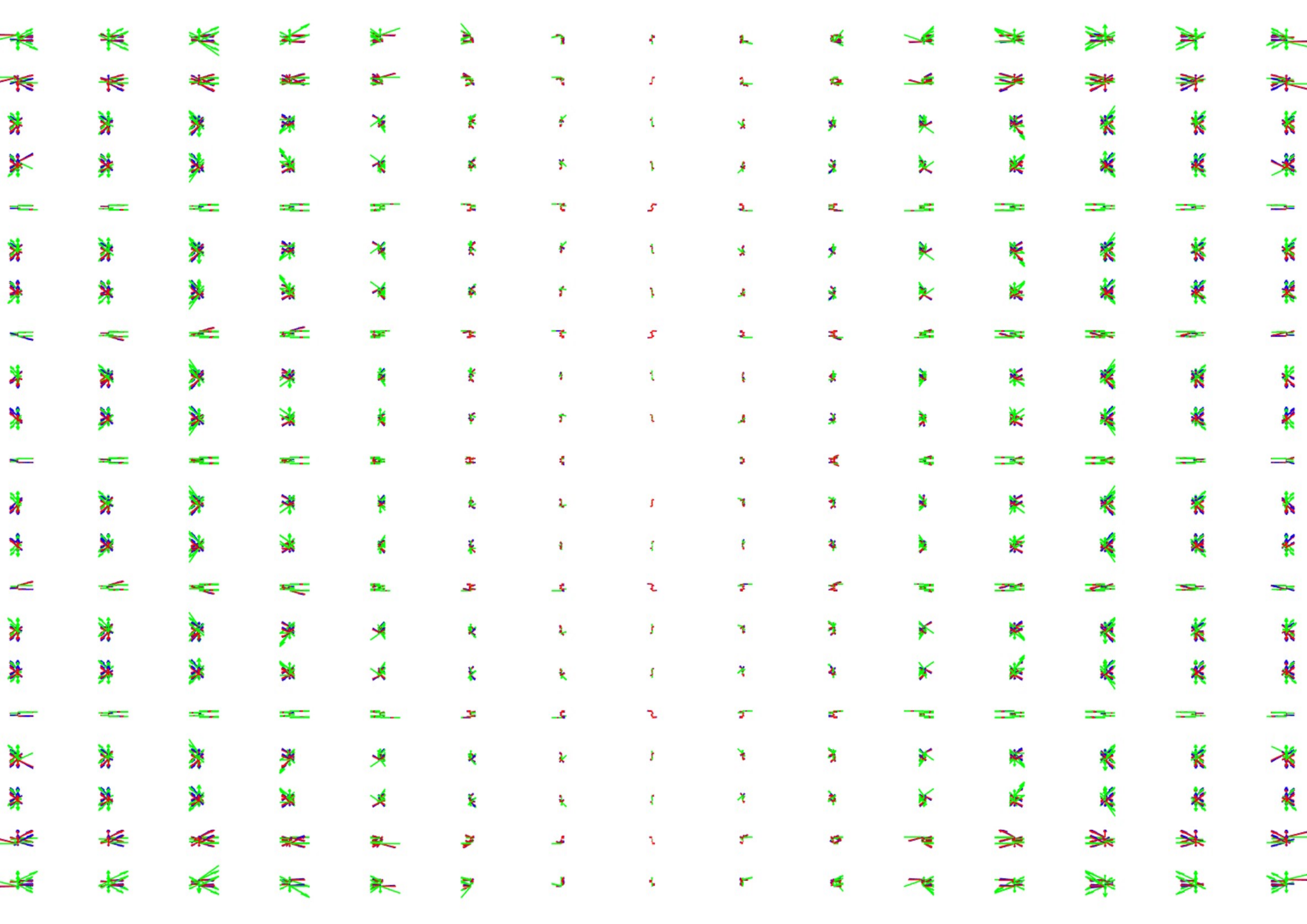}
\caption{\label{}Dissipation (green) and conservative (red) vector field distribution and acceleration field (blue) on XOZ plane for nonlinear system New-1}
\end{figure}
\begin{figure}[ht]
\includegraphics[scale=0.21]{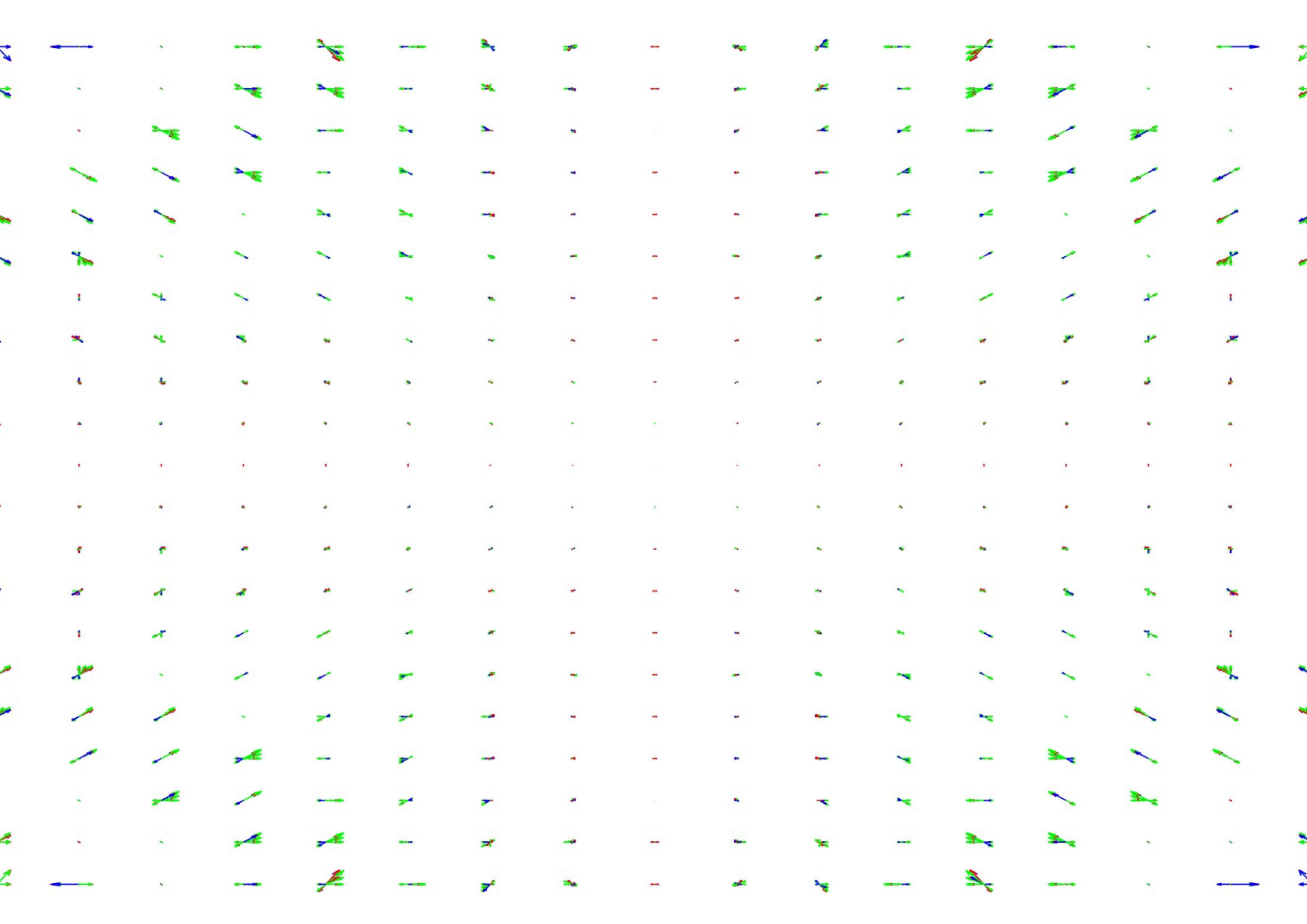}
\caption{\label{}Dissipation (green) and conservative (red) vector field distribution and acceleration field (blue) on YOZ  plane for nonlinear system New-1}
\end{figure}

It can be seen from FIG. 4, FIG. 5 and FIG. 6 that the small disturbance at the center point will bring vibration of the distant space acceleration, which reveals the famous "Butterfly Effect" \cite{Cai2020}. For the new-1 nonlinear system, the maximal amplifying coefficient is between 115 and 155. In fact, due to the superposition of vectors, the small disturbances in different places will be integrated to balance. Therefore, if we want to calculate the disturbance amplifying coefficient in space, we need to consider all the small disturbance values and stack them. The cancellation of the nonlinear interaction can be demonstrated in the experiment of turbulent interaction changing into laminar flow published in 2019 \cite{Kuchnen2019}. In addition, it can be seen from the streamline diagram FIG. 7 that the dissipative structure of nonlinear disturbance is reciprocating, and it can be seen that system will develop from one equilibrium state to another. Here, the dissipative structure theory \cite{Prigogine1965} proposed by Ilya Prigogine is verified mathematically and physically. 
So we can get the following general law in physics. 
\begin{theorem}
All nonlinear processes spontaneously undergo a conservative and dissipative reciprocating process. The generalized potential is distributed in the space in the form of divergence gradient operator, where the spatial distribution of the generalized potential on Laplace operator represents the conservative behavior of the nonlinear system, and the spatial distribution on the double curl operator of the generalized potential represents the dissipative behavior of the nonlinear system, until the kinetic energy of the nonlinear system become zero and chaos stops. 
\end{theorem}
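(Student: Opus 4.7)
The plan is to let Theorem 2 do almost all of the work, and read Theorem 3 as its physical interpretation once the two terms in the decomposition
$$\varphi_{i,\bullet j}^{\bullet, i \bullet}=\varphi_{j,i\bullet}^{\bullet,\bullet i}+\epsilon_{j\bullet \bullet}^{\bullet n m}\epsilon_{m \bullet \bullet}^{\bullet k i}\varphi_{i,nk}$$
have been named. First I would invoke Theorem 1 to cast the evolution law for any nonlinear system as $\rho\, du_j/dt = \varphi_{i,\bullet j}^{\bullet, i \bullet}$, and then apply the Laplacian$+$double-curl identity used in the discussion after Theorem 1 to split the right-hand side into the sum $L_j+D_j$ appearing in Definition 2. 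This step is essentially algebraic and relies only on Lemmas 1 and 2.

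Next I would justify the two labels ``conservative'' and ``dissipative.'' For the Laplacian term, the argument parallels the Lagrange--Laplace--Poisson discussion in the introduction: the operator $\varphi_{j,i\bullet}^{\bullet,\bullet i}$ is the very operator whose vanishing (or prescription) defines classical conservative potential theory, so by Inference 1 and the standard reading of Laplace/Poisson equations it is natural to identify this piece with $L_j$. For the double-curl term, I would appeal to Lemma 1: conjugate gradients carry nonzero curl, and iterated curl is precisely the operation that in electrodynamics and fluid mechanics is associated with vortical, non-potential, energy-redistributing behaviour, matching Inference 2 and the definition of $D_j$.

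For the ``reciprocating process until kinetic energy becomes zero'' clause I would argue in two parts. First, because $L_j$ and $D_j$ are two orthogonal operator components of the same generalized potential $\varphi_j$, any change in $\varphi_j$ that reduces one of them generically changes the other, so the acceleration $\rho\, du_j/dt$ oscillates between being dominated by $L_j$ and by $D_j$; this is the qualitative content of the reciprocation, and I would illustrate it by pointing to the streamline portrait of the new-1 system (FIG.~4--7), where the $L_j$ and $D_j$ arrows alternate in which one aligns with the acceleration field. Second, for the termination clause I would combine the evolution law with an energy balance $\tfrac{d}{dt}\!\left(\tfrac{1}{2}\rho u_j u^j\right) = u^j(L_j+D_j)$ and argue that once $u_j \to 0$ the left-hand side of the Theorem 2 identity vanishes, forcing $L_j = -D_j$, which is the only stationary state and can be read as ``chaos stops.''

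The main obstacle, and it is a real one, is that the statement is physical rather than strictly mathematical: ``spontaneously,'' ``reciprocating,'' and ``chaos stops'' are not defined in the paper. My plan therefore is to make these precise in the course of the proof by (i) defining reciprocation as the sign alternation of $L_j\cdot D_j$ along trajectories, and (ii) defining termination as the kinetic-energy functional reaching zero asymptotically. Once these definitions are in place the proof reduces to the decomposition step plus the energy-balance argument above; without them the theorem can only be supported, as the paper does, by exhibiting the new-1 example and invoking Prigogine's dissipative-structure picture as confirmation.
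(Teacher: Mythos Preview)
Your final paragraph is the accurate one: the paper gives no proof of this theorem at all. Theorem~3 is stated as a ``general law in physics'' immediately after the discussion of the new-1 example; the only support the paper offers is (i) the decomposition already recorded in Theorem~2 and Definition~2, (ii) the streamline pictures of $L_j$, $D_j$ and the acceleration field for the new-1 system (FIG.~4--7), and (iii) a sentence saying that Prigogine's dissipative-structure theory is thereby ``verified mathematically and physically.'' There is no energy-balance computation, no definition of reciprocation, and no argument for termination; the theorem is asserted, not derived.

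So the first two blocks of your plan (invoke Theorem~1, split via the Laplacian/double-curl identity, and read the labels off Inferences~1--2) coincide with everything the paper actually does. The remainder---orthogonality of $L_j$ and $D_j$, the sign-alternation definition of reciprocation, and the energy identity $\tfrac{d}{dt}(\tfrac{1}{2}\rho u_j u^j)=u^j(L_j+D_j)$---is your own addition. Be aware that these extra steps carry their own gaps: nothing in the paper establishes any orthogonality between the two operator pieces, and your energy argument shows only that \emph{if} $u_j\to 0$ then $L_j+D_j\to 0$, not that the kinetic energy must tend to zero. If you intend to present a genuine proof you will need an honest dissipation inequality; if you intend to match the paper, drop everything after the decomposition and simply point to the new-1 figures and the Prigogine reference, as the paper does.
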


\begin{figure}[ht]
\includegraphics[scale=0.15,angle=90]{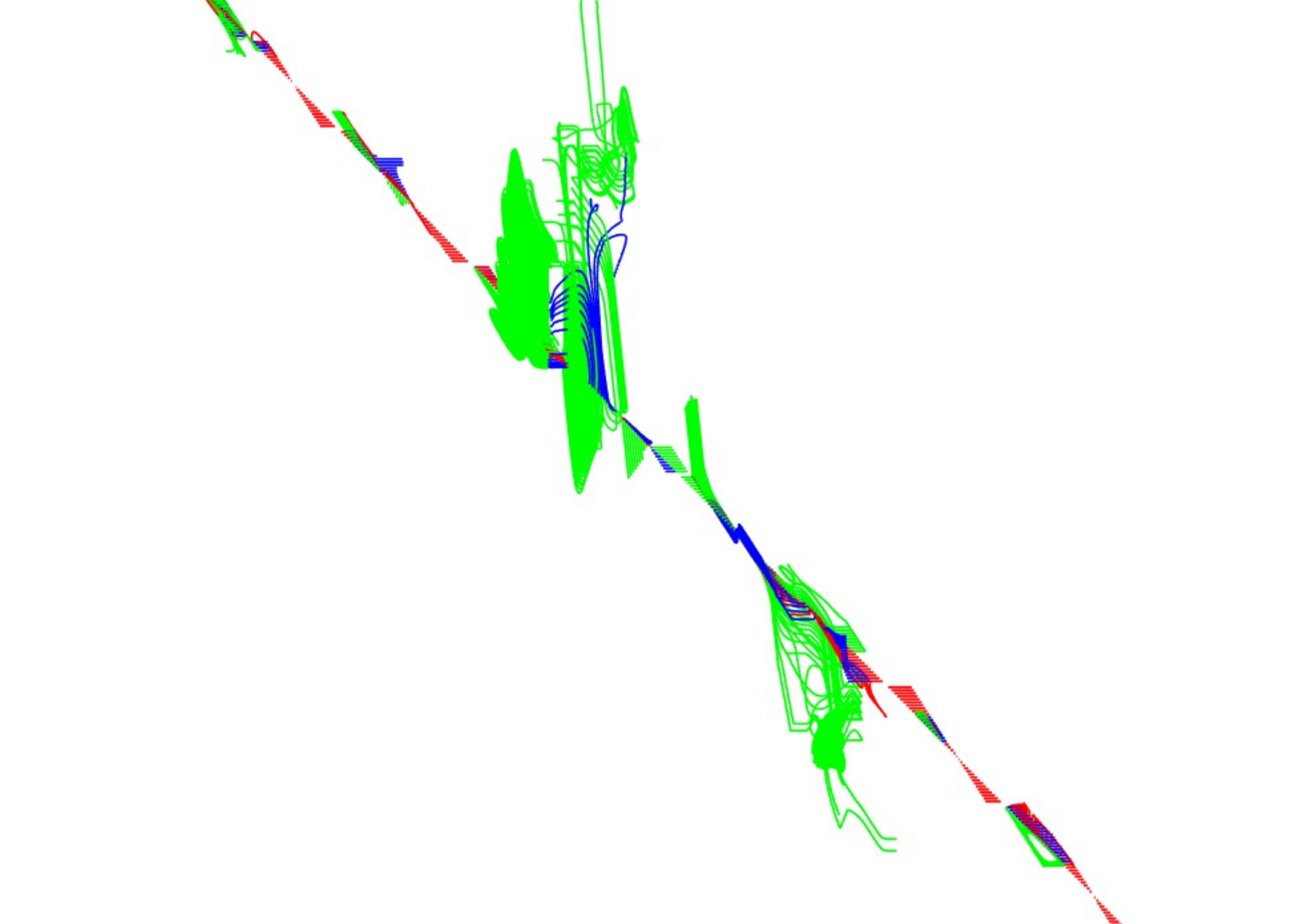}
\caption{\label{}Streamlines of dissipation (green) and conservative (red) vector field distribution and acceleration field (blue) for nonlinear system New-1}
\end{figure}

Nonlinearity is the essential law of nature, so it is not only a physical law, but also a new natural philosophcal methodology. Due to the regular generalized potential distribution (Fig.1 and Fig.2) will lead to random and irregular distribution of kinematic physical quantities in space (Fig.4, Fig.5 and Fig.6) , the double curl operator of complex system will lead to the failure of Laplacian determinism, Therefore, generalized potential and its principle can not only be applied to fluid mechanics, solid mechanics, electrodynamics, even economics, political science, history and other disciplines, but also contribute to development of natural philosophy principle.

In summary, mathematically speaking, the fundamental cause of nonlinear phenomena is the double curl operator of the generalized potential. Due to the influence of different initial value and different boundary conditions, the spatial distribution of generalized potential is different, and the development law of nonlinear system has also undergone great changes. Its application scenarios could be different physical scenarios of analytical mechanics, fluid mechanics, solid mechanics and electrodynamics research. Therefore, the generalized potential can effectively unify classical mechanics. In addition, generalized potential can be used as an effective method to solve nonlinear problems in engineering. This is an open and challenging new field. Determining generalized potentials with different boundary conditions and different initial value will be the focus of future research in this field.

\bibliographystyle{unsrt}
\bibliography{main}

\end{document}